\documentclass[twocolumn,amsthm]{autartArXiv}

\newcounter{saveenumerate}
\makeatletter
\newcommand{\assumptionintertext}[1]{%
\setcounter{saveenumerate}{\value{enum\romannumeral\the\@enumdepth}}
\end{enumerate}
#1
\begin{enumerate}
\renewcommand\labelenumi{(A\theenumi)}
\setcounter{enum\romannumeral\the\@enumdepth}{\value{saveenumerate}}%
}
\makeatother

\usepackage{graphicx}
\graphicspath{{./figures/}}
\DeclareGraphicsExtensions{.pdf,.jpg,.png}

\usepackage{amsmath}
\usepackage{amssymb}
\DeclareMathOperator*{\argmin}{arg\,min}

\usepackage{mathtools}
\DeclarePairedDelimiter{\floor}{\lfloor}{\rfloor}

\usepackage{siunitx}

\usepackage{amsthm}
\newtheorem{notation}{Notation}
\newtheorem{definition}{Definition}
\newtheorem{theorem}{Theorem}
\newtheorem{remark}{Remark}

\usepackage{dsfont}

\usepackage[round,longnamesfirst]{natbib}

\usepackage[colorlinks=true,citecolor=blue]{hyperref}

\usepackage{booktabs}

\begin{document}

\begin{frontmatter}

\title{Initial estimates for Wiener-Hammerstein models using phase-coupled multisines\thanksref{footnoteinfo}}

\thanks[footnoteinfo]{The material in this paper was partially presented at the 19th IFAC World Congress, August 24--29, 2014, Cape Town, South Africa. Corresponding author K.~Tiels. Tel. +32 2 6293665. 
Fax +32 2 6292850.}

\author[Brussels]{Koen Tiels}\ead{Koen.Tiels@vub.ac.be},
\author[Brussels]{Maarten Schoukens}\ead{Maarten.Schoukens@vub.ac.be},
\author[Brussels]{Johan Schoukens}\ead{Johan.Schoukens@vub.ac.be}

\address[Brussels]{Department ELEC, Vrije Universiteit Brussel, 1050 Brussels, Belgium}

\begin{keyword}
Block-oriented nonlinear system; Dynamic systems; Nonlinear systems; Phase-coupled multisines; System identification; Wiener-Hammerstein model.
\end{keyword}

\begin{abstract}
Block-oriented models are often used to model nonlinear systems. These models consist of linear dynamic (L) and nonlinear static (N) sub-blocks. This paper addresses the generation of initial estimates for a Wiener-Hammerstein model (LNL cascade). While it is easy to measure the product of the two linear blocks using a Gaussian excitation and linear identification methods, it is difficult to split the global dynamics over the individual blocks. This paper first proposes a well-designed multisine excitation with pairwise coupled random phases. Next, a modified best linear approximation is estimated on a shifted frequency grid. It is shown that this procedure creates a shift of the input dynamics with a known frequency offset, while the output dynamics do not shift. The resulting transfer function, which has complex coefficients due to the frequency shift, is estimated with a modified frequency domain estimation method. The identified poles and zeros can be assigned to either the input or output dynamics. Once this is done, it is shown in the literature that the remaining initialization problem can be solved much easier than the original one. The method is illustrated on experimental data obtained from the Wiener-Hammerstein benchmark system.
\end{abstract}

\end{frontmatter}

\section{Introduction}
\label{sec: Introduction}

Even if all physical dynamic systems behave nonlinearly to some extent, we often use linear models to describe them. If the nonlinear distortions get too large, a linear model is insufficient, and a nonlinear model is required.

One possibility is to use block-oriented models \citep{Billings1982,Giri2010}, which combine linear dynamic (L) and nonlinear static (N), i.e. memoryless, blocks.
Due to this highly structured nature, block-oriented models offer insight about the system to the user. This can be useful in e.g. fault detection, to detect in which part of the system a fault occurred, e.g. changing dynamics in only part of the model.
Block-oriented models are preferred when there are localized nonlinearities in the system, thus leading to a sparse representation of the system in terms of interconnected blocks.
Due to the separation between the dynamics and the nonlinearities, block-oriented models also allow for an easy discretization (i.e. the conversion from a continuous-time to a discrete-time representation). We refer the reader to \citet{Giri2010} for an elaborated discussion.
The simplest block-oriented models are the Wiener model (LN cascade) and the Hammerstein model (NL cascade). They can be generalized to a Wiener-Hammerstein model (LNL cascade, see \autoref{fig: Wiener-Hammerstein system}).
Applications of Wiener-Hammerstein models can mainly be found in biology \citep{Korenberg1986,Dewhirst2010,Bai2009}, but also in the modeling of RF power amplifiers \citep{Isaksson2006}.

Several identification methods have been proposed to identify Wiener-Hammerstein systems.
Early work can be found in \citet{Billings1982,Korenberg1986}. The maximum likelihood estimate is formulated in \citet{Chen1992}. Wiener-Hammerstein systems are modeled as the cascade of well-selected Hammerstein models in \citet{Wills2012}. The recursive identification of error-in-variables Wiener-Hammerstein systems is considered in \citet{Mu2014}.
Both \citet{Chen1992} and \mbox{\citet{Wills2012}} indicate the importance of good initial estimates, but not how to obtain them.
\citet{Sjoberg2012b} indicates the importance of good initial estimates on an example. The optimization of the model parameters can either converge extremely slowly or get trapped in a local optimum, even if the correct number of poles and zeros is assigned to both the input and the output dynamics, leading to Wiener-Hammerstein models that only fit about as well as a linear model.

Some approaches obtain initial estimates by using specifically designed experiments. For example, \citet{Vandersteen1997} proposes a series of experiments with large and small signal multisines. \citet{Weiss1998} uses only two experiments with paired multisines, but the approach requires the estimation of the Volterra kernels of the system. \citet{Crama2005} proposes an iterative initialization scheme that only requires one experiment of a well-designed multisine excitation.

A major difficulty is the generation of good initial values for the two linear blocks $R(q)$ and $S(q)$ of the Wiener-Hammerstein system (see \autoref{fig: Wiener-Hammerstein system}).
An initial estimate for the static nonlinearity can be obtained using a simple linear regression if a basis function expansion, linear-in-the-parameters, for the nonlinearity is used, and if the dynamics are initialized.
The poles and the zeros of both $R(q)$ and $S(q)$ can be obtained from the best linear approximation (\textsc{BLA}) \citep{Pintelon2012} of the Wiener-Hammerstein system. To obtain initial estimates for $R(q)$ and $S(q)$, the poles and the zeros of the \textsc{BLA} should be split over the individual transfer functions $R(q)$ and $S(q)$. Several methods have been proposed to make this split.
The brute-force method in \citet{Sjoberg2012} scans all possible splits, leading to an exponential scanning problem.
The advanced method in \citet{Sjoberg2012} uses a basis function expansion for the input dynamics and one for the inverse of the output dynamics. A scanning procedure over the basis functions is proposed as well. Compared to the brute-force method, the number of scans is lower, but the computational time can still be large.
The approach in \citet{Westwick2012} not only uses the \textsc{BLA}, but also the so-called quadratic \textsc{BLA} (\textsc{QBLA}), a higher-order \textsc{BLA} from the squared input to the output residual of the first-order \textsc{BLA}. By doing so, the number of possible splits is reduced significantly. Due to the higher-order nature of the \textsc{QBLA}, however, long measurements are needed to obtain an accurate estimate.
The nonparametric separation method proposed in \citet{Schoukens2014a} avoids the pole/zero assignment problem completely, but also uses the \textsc{QBLA}.

The method proposed in \citet{Schoukens2014b} and further developed in this paper uses again the first-order \textsc{BLA}. Using a well-designed excitation signal, the poles and the zeros of the input dynamics $R(q)$ shift with a frequency offset that can be chosen by the user, while the poles and the zeros of the output dynamics $S(q)$ remain invariant. Long measurement times can be avoided, because no use is made of higher-order \textsc{BLA}s.
This paper generalizes the basic ideas in \citet{Schoukens2014b} from cubic nonlinearities to polynomial nonlinearities. Moreover, experimental results on the Wiener-Hammerstein benchmark system \citep{Schoukens2009} are reported.

The rest of this paper is organized as follows.
The basic setup is described in \autoref{sec: Setup}.
A brief overview of the \textsc{BLA} is presented in \autoref{sec: BLA WH random-phase MS}.
The proposed method is presented in \autoref{sec: BLA WH phase-coupled MS}.
The experimental results on the Wiener-Hammerstein benchmark system are reported in \autoref{sec: Experimental results}.
Finally, the conclusions are drawn in \autoref{sec: Conclusion}.

\begin{figure}
	\centering
	\includegraphics[width=0.45\textwidth]{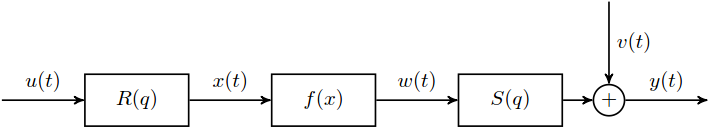}
	\caption{A Wiener-Hammerstein system
			($R$ and $S$ are linear dynamic systems and $f$ is a nonlinear static system).
			\label{fig: Wiener-Hammerstein system}
		}
\end{figure}

\section{Setup}
\label{sec: Setup}

This section introduces some notation. It also presents the considered Wiener-Hammerstein system and the assumptions.

\subsection{Notation}

Without loss of generality, discrete-time systems are considered. Hence, the integer $t$ denotes the time as a number of samples. The results in this paper generalize to continuous-time systems with some minor modifications.

\begin{notation}[$X(k)$ and $x(t)$]
\label{notation: (inverse) DFT}
	The discrete Fourier transform (\textsc{DFT}) of a time domain signal $x(t)$ is denoted by \mbox{$X(k) = X(e^{j \omega_k})$}, and is given by
\begin{equation}
	X(k) = \frac{1}{\sqrt{N}} \sum_{t=0}^{N-1} x(t) e^{-j 2\pi \frac{k}{N} t}
	\, .
\end{equation}
The inverse \textsc{DFT} is given by
\begin{equation}
\label{eq: inverse DFT}
	x(t) = \frac{1}{\sqrt{N}} \sum_{k=-N/2+1}^{N/2} X(k) e^{j 2\pi \frac{k}{N} t}
	\, .
\end{equation}
\end{notation}
\begin{notation}[$q^{-1}$]
	The backward shift operator is denoted by $q^{-1}$, i.e. \mbox{$q^{-1} x(t) = x(t - 1)$}.
\end{notation}
\begin{notation}[$O(\cdot)$]
	The notation $h$ is an $O(N^{\alpha})$ indicates that for $N$ big enough, \mbox{$\lvert h(N) \rvert \le c {N}^{\alpha}$}, where $c$ is a strictly positive real number.
\end{notation}
\begin{notation}[${(\cdot)}^{*}$]
	The complex conjugate of a complex number $X$ is denoted by ${X}^{*}$.
\end{notation}

\subsection{The Wiener-Hammerstein system}

Consider the Wiener-Hammerstein system in \autoref{fig: Wiener-Hammerstein system}, given by
\begin{equation}
	\label{eq: WH}
	\begin{aligned}
		x(t)	& = R(q) u(t)	\, ,\\
		w(t)	& = f(x(t))	\, ,\\
		y(t)		& = S(q) w(t) + v(t)	\, ,
	\end{aligned}
\end{equation}
where $R(q)$ and $S(q)$ are linear time-invariant (LTI) discrete-time transfer functions, i.e.
\begin{equation}
	\begin{aligned}
		R(q)		& = \frac{B_R(q)}{A_R(q)} 
								= \frac	{\sum_{l = 0}^{n_R} 
										b_{R,l} q^{-l}}
									{\sum_{l = 0}^{m_R} 
										a_{R,l} q^{-l}}	
								\, ,\\
		S(q)	& = \frac{B_S(q)}{A_S(q)} 
								= \frac	{\sum_{l = 0}^{n_S}
										b_{S,l} q^{-l}}
									{\sum_{l = 0}^{m_S}
										a_{S,l} q^{-l}}
								\, ,
	\end{aligned}
\end{equation}
and where $f(x)$ is a static nonlinear function.
Only the input $u(t)$ and the noise-corrupted output $y(t)$ are available for measurement.

\subsection{Assumptions}

This paper addresses the generation of initial estimates for the linear dynamics $R(q)$ and $S(q)$.
To do this, assumptions (A\ref{ass: polynomial nonlinearity})--(A\ref{ass: steady-state}) are made.

\begin{enumerate}
\renewcommand\labelenumi{(A\theenumi)}
	\item \label{ass: polynomial nonlinearity}
		The static nonlinearity $f(x)$ can be arbitrarily well approximated by a polynomial in the interval $[\min(x(t)), \max(x(t))]$.
		Hence, $f(x)$ can be thought of as \mbox{$f(x) = \sum_{D = 0}^\infty \gamma_D x^D$}.
	\assumptionintertext{Note that a uniformly convergent polynomial approximation of a continuous nonlinearity is always possible on a closed interval due to the Weierstrass approximation theorem \citep{WeissteinWeierstrass}. The type of convergence can be relaxed to mean-square convergence, thus allowing for some discontinuous nonlinearities as well.}
	\item \label{ass: nonlinear odd term}
		At least one nonlinear odd term is present, i.e. there exists an odd \mbox{$D \ge 3$} for which \mbox{$\gamma_D \ne 0$}.
	\assumptionintertext{Assumption (A\ref{ass: nonlinear odd term}) is slightly stricter than the assumption of non-evenness of the nonlinearity, which is typically made in other BLA approaches. When facing an even nonlinearity, a slight DC offset is typically added to the input, such that the nonlinearity is no longer even around the new set-point. If the nonlinearity only has polynomial terms up to second degree, i.e. it is a parabola, then adding a DC offset to the input can make the nonlinearity non-even, but it will not create a nonlinear odd term.}
	\item \label{ass: output noise}
		The additive output noise $v(t)$ is a sequence of zero-mean filtered white noise  that is independent of the excitation signal $u(t)$.
	\assumptionintertext{Under this assumption, the classical least-squares framework is known to result in consistent estimates of the \textsc{BLA} \citep{Pintelon2012}. To simplify the notation, and without loss of generality, the results in this paper are presented in the noise-free case.
Input and process noise are not considered here. In general, this would result in biased estimates. A more involved errors-in-variables approach (e.g. \citet{Mu2014}) is required to obtain unbiased estimates in this more general case.}
	\item \label{ass: steady-state}
		The system operates in steady-state.
\end{enumerate}

\section{The \textsc{BLA} of a Wiener-Hammerstein system using random-phase multisines}
\label{sec: BLA WH random-phase MS}

This section briefly reviews the \textsc{BLA} of a system. Explicit expressions for the output spectrum of the \textsc{BLA} of a Wiener-Hammerstein system, excited by a random-phase multisine, are provided.
These expressions will be convenient to set the ideas for the proposed method in \autoref{sec: BLA WH phase-coupled MS}, where phase-coupled multisines are proposed.

\subsection{Random-phase multisine excitation}

This paper considers multisine excitations.
\begin{definition}[multisine]
	\label{def: MS}
	A signal $u(t)$ is a multisine if
	\begin{equation}
		\label{eq: MS}
		u(t) = \sum_{k = -N/2 +1}^{N/2}
						U_k e^{j 2 \pi \frac{k}{N} t}
		\quad \text{for } t = 0, 1, \ldots, N - 1
		\, ,
	\end{equation}
	where the Fourier coefficients \mbox{$U_k = {U}^{*}_{- k} = \lvert U_k \rvert e^{j\phi_k}$} are either zero (no excitation present at frequency line $k$) or have a normalized amplitude \mbox{$\vert U_k \rvert = \frac{1}{\sqrt{N}} \check{U}\left( \frac{k}{N} \right)$}, where \mbox{$\check{U}\left( \frac{\omega}{2\pi} \right) \in \mathds{R}^{+}$} is a uniformly bounded function.
\end{definition}
\begin{definition}[random-phase multisine]
	\label{def: random-phase MS}
	A signal $u(t)$ is a random-phase multisine if it is a multisine (see Definition~\ref{def: MS}) where the phases $\phi_k$ are independently and identically distributed with the property \mbox{$E\left\{e^{j \phi_k}\right\} = 0$}.
\end{definition}

\subsection{The best linear approximation}

The \textsc{BLA} of a system is defined as the linear system whose output approximates the system's output best in mean-square sense around the operating point \citep{Pintelon2012}, i.e.
\begin{definition}[best linear approximation]
	\begin{equation}
		\label{eq: BLA}
		G_\mathrm{BLA}(k) := \argmin_{G(k)} E_u\left\{\lVert 
			\tilde{Y}(k) - G(k) \tilde{U}(k) \rVert^2\right\}
		\, ,
	\end{equation}
	with
	\begin{equation}
		\begin{cases}
			\tilde{u}(t) = u(t) - E\left\{u(t)\right\}		\\
			\tilde{y}(t) = y(t) - E\left\{y(t)\right\}
		\end{cases}
		\, ,
	\end{equation}
	where $G_\mathrm{BLA}$ is the frequency response function (\textsc{FRF}) of the \textsc{BLA}, and where the expectation in~\eqref{eq: BLA} is taken with respect to the random input $u$.
\end{definition}
\begin{remark}
	In the remainder of this paper, it is assumed that the mean values are removed from the signals when a \textsc{BLA} is calculated. The notations $u$ and $y$ will be used, instead of $\tilde{u}$ and $\tilde{y}$.
\end{remark}
It can be shown that
\begin{equation}
	\label{eq: BLA random}
	G_\mathrm{BLA}(k) = \frac 	{S_{YU}(k)}
							{S_{UU}(k)}
	\, ,
\end{equation}
where the expectation in the cross-power and auto-power spectra is again taken with respect to the random input $u$.
Note that for periodic excitations, \eqref{eq: BLA random} reduces to \citep{Schoukens2012}
\begin{equation}
	\label{eq: BLA periodic}
	G_\mathrm{BLA}(k) = E_u\left\{\frac{Y(k)}{U(k)}\right\}
	\, .
\end{equation}
It follows from \eqref{eq: BLA random} and Bussgang's theorem~\citep{Bussgang1952} that the \textsc{BLA} of the considered Wiener-Hammerstein system for Gaussian excitations is proportional to the product of the underlying dynamics.
This is summarized in the following theorem.
\begin{theorem}
	\label{theorem: BLA WH}
	The \textsc{BLA} of the Wiener-Hammerstein system in \eqref{eq: WH}, excited by Gaussian noise or by a random-phase multisine, is equal to
	\begin{equation}
		G_\mathrm{BLA}(k) = c_\mathrm{BLA} R(k) S(k) + O(N^{-1})
		\, ,
	\end{equation}
	where $c_\mathrm{BLA}$ is a constant that depends upon the odd nonlinearities in $f(x)$ and the power spectrum of the input signal.
\end{theorem}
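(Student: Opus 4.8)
The plan is to compute the output spectrum of the Wiener-Hammerstein system under a random-phase multisine excitation and then extract the dominant $O(1)$ term. First I would use assumption (A\ref{ass: polynomial nonlinearity}) to write $f(x) = \sum_D \gamma_D x^D$, so that $w(t) = f(R(q)u(t))$ is a sum of homogeneous terms $\gamma_D (R(q)u(t))^D$. Passing the intermediate signal $x(t) = R(q)u(t)$ through a $D$-th power and then through $S(q)$, the contribution to $Y(k)$ from the degree-$D$ term is a $D$-fold sum over frequency indices $k_1,\dots,k_D$ with $k_1+\cdots+k_D \equiv k \pmod N$, each summand carrying a factor $S(k) \prod_{i=1}^D R(k_i) U_{k_i}$. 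Forming $G_\mathrm{BLA}(k)$ via \eqref{eq: BLA periodic}, i.e. dividing by $U(k)$ and taking the expectation over the random phases, only those index tuples survive for which the product $\prod_i e^{j\phi_{k_i}}$ has nonzero expectation after cancelling the $e^{-j\phi_k}$ from the division by $U(k)$.

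The combinatorial heart of the argument is the phase-averaging step. Because the phases are i.i.d.\ with $E\{e^{j\phi_k}\} = 0$ (Definition~\ref{def: random-phase MS}) and $U_{-k} = U_k^*$, a product of exponentials $e^{j(\phi_{k_1}+\cdots+\phi_{k_D})} e^{-j\phi_k}$ has nonzero mean only when the $k_i$ pair up into conjugate pairs with exactly one unpaired index equal to $k$; this forces $D$ to be odd, which is where assumption (A\ref{ass: nonlinear odd term}) guarantees the leading term is nonempty. For each odd $D$, the surviving tuples are indexed by a choice of which index equals $k$ and a perfect matching of the remaining $D-1$ indices into $\{k_i, -k_i\}$ pairs; summing $\lvert U_{k_i}\rvert^2 = \lvert R(k_i)\rvert^2$-weighted contributions over all excited lines yields a factor $R(k) S(k)$ times a sum of products of input power-spectrum moments, and the number of such matchings contributes the combinatorial multiplicity. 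Collecting these over all odd $D$ gives exactly $c_\mathrm{BLA} R(k) S(k)$, with $c_\mathrm{BLA} = \sum_{D \text{ odd}} \gamma_D \, \kappa_D$, where $\kappa_D$ depends only on the $\check U(\cdot)$ profile (the input power spectrum) and the combinatorial factors. The "non-matching" tuples — those with some index appearing with net nonzero multiplicity — either average to zero or, when near-coincidences of indices occur (e.g.\ $k_i = -k_j$ collapsing a pair), contribute terms that are smaller by a factor $1/N$ because they involve one fewer free summation index relative to the normalization $\lvert U_k\rvert^2 = \frac{1}{N}\check U^2$; this is the source of the $O(N^{-1})$ remainder. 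One may alternatively short-circuit the whole computation by invoking \eqref{eq: BLA random} together with Bussgang's theorem as the excerpt suggests: Bussgang gives that the cross-correlation between a Gaussian input and a static nonlinearity of it equals a constant times the input autocorrelation, so $S_{X W}$ is proportional to $S_{XX}$, and propagating through the LTI blocks $R$ and $S$ turns this into $S_{YU}(k) \propto R(k) S(k) S_{UU}(k)$; the random-phase multisine case then follows because such multisines are asymptotically Gaussian as $N \to \infty$, with the discrepancy being $O(N^{-1})$.

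The main obstacle I anticipate is making the $O(N^{-1})$ bound on the remainder fully rigorous: one must carefully enumerate the "almost-diagonal" index configurations (tuples that are neither fully conjugate-paired nor generic) and show that each class of them contributes at most $O(N^{-1})$ after the $\frac{1}{\sqrt N}$-normalization of the Fourier coefficients is accounted for, uniformly in $k$ and using the uniform boundedness of $\check U$. A secondary technical point is justifying the interchange of the infinite sum over $D$ with the expectation and the frequency sums; this is where the uniform (or mean-square) convergence of the polynomial approximation in (A\ref{ass: polynomial nonlinearity}) and the boundedness of the excited amplitudes are needed, together with the steady-state assumption (A\ref{ass: steady-state}) to ensure the convolutions are genuinely over a single period. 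Once these estimates are in place, collecting the leading terms gives the claimed expression with $c_\mathrm{BLA}$ identified as above.
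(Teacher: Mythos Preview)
Your proposal is correct and, in fact, more detailed than what the paper provides: the paper's proof consists solely of a citation to \citet[pp.~85--86]{Pintelon2012}. That said, your combinatorial argument---polynomial expansion, multi-fold frequency sum, phase-averaging to isolate conjugate-paired index tuples with one free index equal to $k$, and the $O(N^{-1})$ remainder from degenerate (``almost-diagonal'') configurations---is precisely the derivation the paper itself sketches in the subsection immediately following the theorem, culminating in equation~\eqref{eq: BLA contributions random-phase MS} and the remark that the $O(N^{-1})$ term arises because the permutation count of $(k,k,-k)$ differs from that of $(k,l,-l)$. Your Bussgang shortcut is likewise the argument the paper invokes in the sentence preceding the theorem statement. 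So the approaches coincide.

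Two minor remarks. First, the weight in the paired contributions should be $\lvert X(l)\rvert^2 = \lvert R(l)\, U(l)\rvert^2$, not ``$\lvert U_{k_i}\rvert^2 = \lvert R(k_i)\rvert^2$'' as you wrote; this is just a notational slip. Second, assumption~(A\ref{ass: nonlinear odd term}) is not actually needed for the theorem as stated---it only guarantees $c_\mathrm{BLA}\neq 0$, which the paper notes separately after the proof.
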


\begin{proof}
	This is shown in \citet[pp.~ 85--86]{Pintelon2012}.
\end{proof}

The constant $c_\mathrm{BLA}$ is nonzero if $f(x)$ is non-even around the operating point.
Theorem~\ref{theorem: BLA WH} shows that it is easy to measure the product \mbox{$R(k) S(k)$} of a Wiener-Hammerstein system by measuring its \textsc{BLA} for an (asymptotically) Gaussian excitation.

\subsection{The output spectrum of the \textsc{BLA}}

An explicit expression for the output spectrum of the \textsc{BLA} of a Wiener-Hammerstein system, excited by a random-phase multisine, is derived in this subsection.
For the more complex case of a phase-coupled multisine, a similar derivation will be used (see Theorem~\ref{theorem: FRFs phase-coupled MS}).

Under Assumptions~(A\ref{ass: polynomial nonlinearity}) and~(A\ref{ass: steady-state}), the noise-free output spectrum of the Wiener-Hammerstein system in \eqref{eq: WH} is equal to
\begin{equation}
	Y(k) = \gamma_0 + \sum_{D = 1}^\infty 
								\gamma_D Y_D(k)
	\, ,
\end{equation}
where $Y_D(k)$ is the noise-free output spectrum of a Wiener-Hammerstein system that contains a pure $D$th-degree nonlinearity (\mbox{$f(x) = x^D$}).
The multiplication in the time domain corresponds to a convolution in the frequency domain, and thus (also keeping in mind the normalization factor in the inverse \textsc{DFT} in \eqref{eq: inverse DFT})
\begin{multline}
	\label{eq: output spectrum D}
	Y_D(k) \\= 
						\left( \frac{1}{\sqrt{N}} \right)^{D - 1}
						S(k) 
						\sum_{l_1, l_2, \ldots, l_D = -N/2 +1}^{N/2}
						\prod_{i = 1}^D 
						R(l_i) 
						U(l_i)
	\, ,
\end{multline}
such that \mbox{$\sum_{i = 1}^D l_i = k$}.
The only terms in $Y_D(k)$ that contribute to the \textsc{BLA} are those where the product \mbox{$\prod_{i=1}^D U(l_i)$} has a phase \mbox{$\phi_k = \angle U(k)$}. Terms that also depend on \mbox{$\phi_{l \ne k}$} will be eliminated in the expected value \mbox{$E_u\left\{Y(k) {U}^{*}(k)\right\}$} in~\eqref{eq: BLA random} or in the expected value in~\eqref{eq: BLA periodic}.
The contributing terms are those where one of the $l_i$s is equal to $k$, and where the other factors combine pairwise to \mbox{$X(l) X(-l) = \lvert X(l) \rvert^2$}. Note that this is only possible if $D$ is odd. Summing up all the terms in~\eqref{eq: output spectrum D} that contribute to the \textsc{BLA} results in
\begin{equation}
	\label{eq: BLA contributions random-phase MS}
	Y_{D, \mathrm{BLA}}(k) =
	\begin{aligned}[t]
		& D! S(k) R(k) U(k) \left( \frac{1}{N}\sum_{l = -N/2 +1}^{N/2} \lvert X(l) \rvert^2 \right)^{\frac{D-1}{2}}		\\
		& + O(N^{-1})
	\, .
	\end{aligned}
\end{equation}
For example, for \mbox{$f(x) = x^3$}, the \textsc{BLA} is equal to
\begin{equation}
	G_\mathrm{BLA}(k) = 6 S(k) R(k) \left( \frac{1}{N} \right.\sum_{l = -N/2 +1}^{N/2} \lvert X(l) \rvert^2 \left. \vphantom{\frac{1}{N}}\right) + O(N^{-1})
	\, .
\end{equation}
The error term $O(N^{-1})$ is due to the fact that there are six permutations of \mbox{$(k, l, -l)$} if \mbox{$k \ne l$}, while there are only three permutations of \mbox{$(k, k, -k)$}.

\section{The proposed method}
\label{sec: BLA WH phase-coupled MS}

In this section, we propose to use a special type of multisines, namely phase-coupled multisines.
After defining phase-coupled multisines, it is shown how the use of these signals makes it possible to separate the input and the output dynamics.
Finally, a practical issue when working with phase-coupled multisines is addressed.

\subsection{Phase-coupled multisine excitation}

Phase-coupled multisines are multisines where specifically selected pairs of frequency lines where excitation is present have the same phase.
Depending on whether excitation is present at both even and odd, or only at odd frequency lines, we are dealing with a full or an odd phase-coupled multisine, respectively.

\begin{definition}[full phase-coupled multisine]
	\label{def: full phase-coupled MS}
	A signal $u(t)$ is a phase-coupled multisine if it is a multisine (see Definition~\ref{def: MS}) where excitation is only present at frequency lines $k$ for which
	\begin{equation}
		\pm k \in \left\{\left(	\frac{d}{2} + d i, 
							\frac{d}{2} + d i + s\right)\right\}
		\quad \text{for } i = 0, 1, \ldots, i_\mathrm{max}
		\, ,
	\end{equation}
	where $d$ and $s$ are integers (more details below),
	and if each of the frequency couples gets assigned an independently and identically distributed random phase \mbox{$\phi_{\frac{d}{2} + d i} = \phi_{\frac{d}{2} + d i + s}$} with the property \mbox{$E\left\{e^{j \phi_{\frac{d}{2} + d i}}\right\} = 0$}.
	For a full phase-coupled multisine, the even integer \mbox{$d \ge 4$} determines the frequency resolution (excitation is present only every $d$th frequency line, starting from the frequency lines $\frac{d}{2}$ and \mbox{$\frac{d}{2} + s$}), and \mbox{$s = c_\mathrm{shift} d + 1$} will determine the shift of the poles and the zeros of $R(q)$ with respect to those of $S(q)$, where \mbox{$c_\mathrm{shift} > 0$} is an integer.
\end{definition}

It can be useful to only have excitation present at the odd frequency lines, just as with random-phase multisines \citep{Schoukens2005}.
The definition of the phase-coupled multisine then slightly changes.
\begin{definition}[odd phase-coupled multisine]
	\label{def: odd phase-coupled MS}
	A signal $u(t)$ is an odd phase-coupled multisine if it is a phase-coupled multisine (see Definition~\ref{def: full phase-coupled MS}) where \mbox{$\frac{d}{2} \ge 5$} is an odd integer, and where \mbox{$s = c_\mathrm{shift} d + 2$}.
\end{definition}

To simplify the notation in the rest of the paper, define
\begin{equation}
	\label{eq: excited line group one}
	m := \frac{d}{2} + d i
	\, .
\end{equation}

\begin{remark}
\label{remark: frequency resolution phase-coupled MS}
Although this is not really necessary, the requirement \mbox{$d \ge 4$} for a full phase-coupled multisine, and \mbox{$\frac{d}{2} \ge 5$} for an odd phase-coupled multisine makes sure that there is no excitation present at the frequency lines \mbox{$k = m - s$} and \mbox{$k = m + 2s$}. Hence, the output spectrum at these frequency lines will not be disturbed by linear contributions.
\end{remark}

\subsection{New terms in the output spectrum of the \textsc{BLA}}

Since $U(m)$ and $U(m+s)$ have the same phase in a phase-coupled multisine, also other terms than the ones reported in~\eqref{eq: BLA contributions random-phase MS} will contribute to the \textsc{BLA} at the frequency lines $m$ and $m+s$. Moreover, the output spectrum will contain terms that are proportional to $S(k)$ and shifted versions of $R(k)$ at some frequency lines where no excitation is present. This is explained below.

Let us take a look at the terms in $Y_D(k)$ in~\eqref{eq: output spectrum D} where the product \mbox{$\prod_{i=1}^D U(l_i)$} has a phase \mbox{$\angle U(m)$}.
These are the terms where one of the frequency lines $l_i$ is equal to $m$ or $m+s$ ($U(m+s)$ has the same phase as $U(m)$), and where the remaining factors in the product \mbox{$\left(\frac{1}{\sqrt{N}}\right)^{D-1}\prod_{i = 1}^D R(l_i)  U(l_i)$} combine pairwise to a constant.
The possible values for these \mbox{$\frac{D - 1}{2}$} (complex) constants are either
\begin{subequations}
	\begin{equation}
		c_0 = \frac{1}{N} \sum_{l = -N/2 +1}^{N/2} 
									X(l) X(-l)
		\, ,
	\end{equation}
	\begin{equation}
		c_{-s} = \frac{1}{N} \sum_{l = -N/2 +1}^{N/2} 
									X(l) X(-(l+s))
		\, ,
	\end{equation}
	or
	\begin{equation}
		c_{s} = \frac{1}{N} \sum_{l = -N/2 +1}^{N/2} 
									X(-l) X(l+s)
		\, .
	\end{equation}
\end{subequations}
Note that whenever a pair of factors combines to $c_{-s}$, a frequency shift $-s$ is introduced in the sum \mbox{$\sum_{i = 1}^D l_i = k$}. Likewise, a frequency shift $s$ is introduced whenever a pair of factors combines to \mbox{$c_{s} = {(c_{-s})}^{*}$}.
There are thus \mbox{$D+1$} frequency lines $k$ that range from \mbox{$m - \frac{D-1}{2}s$} to \mbox{$m + \frac{D+1}{2}s$} in steps of $s$ where the product \mbox{$\prod_{i=1}^D U(l_i)$} has a phase \mbox{$\angle U(m)$}.
The smallest frequency line (\mbox{$k = m - \frac{D-1}{2}s$}) is obtained when one frequency line $l_i$ is equal to $m$, and when all the other frequency lines form pairs \mbox{$(l, -(l+s))$}.
The largest frequency line (\mbox{$k = m + \frac{D+1}{2}s$}) is obtained when one frequency line $l_i$ is equal to \mbox{$m + s$}, and when all the other frequency lines form pairs \mbox{$(-l, l+s)$}.
From the discussion above, the following theorem follows:
\begin{theorem}
\label{theorem: FRFs phase-coupled MS}
	Under Assumptions~(A\ref{ass: polynomial nonlinearity}) and~(A\ref{ass: steady-state}), and for \mbox{$i = -\frac{D - 1}{2}, -\frac{D - 1}{2} + 1, \ldots, \frac{D + 1}{2}$}, the expectation $E_u\left\{\frac{Y_D(m + i s)}{U(m)}\right\}$ is equal to
	\begin{equation}
		\begin{aligned}[t]
			& D! S(m + i s) \left[ R(m) \sum_{\sum_k s_k = i s} \prod_{k = 1}^{\frac{D-1}{2}} c_{s_k} \right.						\\
			& + \left. R(m+s) \frac{\lvert U(m+s) \rvert}{\lvert U(m) \rvert} \sum_{\sum_k s_k = (i-1) s} \prod_{k = 1}^{\frac{D-1}{2}} c_{s_k} \right]	\\
			& + O(N^{-1})
		\, ,
		\end{aligned}
	\end{equation}
	and \mbox{$E_u\left\{\frac{Y_D(-(m + i s))}{U(-m)}\right\} = {\left( E_u\left\{\frac{Y_D(m + i s)}{U(m)}\right\} \right)}^{*}$} is equal to
	\begin{equation}
	\label{eq: shifted BLA negative frequencies}
		\begin{aligned}[t]
			& D! S(-(m + i s)) \left[ R(-m) \sum_{\sum_k s_k = - i s} \prod_{k = 1}^{\frac{D-1}{2}} c_{s_k} \right.							\\
			& + \left. R(-(m+s)) \frac{\lvert U(m+s) \rvert}{\lvert U(m) \rvert} \sum_{\sum_k s_k = - (i-1) s} \prod_{k = 1}^{\frac{D-1}{2}} c_{s_k} \right]	\\
			& + O(N^{-1})
		\end{aligned}
	\end{equation}
	for the Wiener-Hammerstein system in~\eqref{eq: WH}, excited by a phase-coupled multisine (see Definitions~\ref{def: full phase-coupled MS} and~\ref{def: odd phase-coupled MS}), and where each \mbox{$s_k \in \{ -s, 0, s \}$}. 
\end{theorem}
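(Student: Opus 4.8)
The plan is to mimic the derivation already carried out for the random-phase case in equations~\eqref{eq: output spectrum D}--\eqref{eq: BLA contributions random-phase MS}, but to keep track of the extra combinatorics introduced by the phase coupling. Starting from~\eqref{eq: output spectrum D}, I would restrict attention to those terms in the $D$-fold sum whose product $\prod_{i=1}^{D} U(l_i)$ carries the phase $\angle U(m)$, since, exactly as argued in the random-phase discussion, all other terms have a phase depending on some $\phi_{l}$ with $l$ not in the excited couple and are therefore annihilated by the expectation $E_u\{\,\cdot\,U^{*}(m)\}$ in~\eqref{eq: BLA random} (equivalently, by the expectation in~\eqref{eq: BLA periodic}). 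First I would isolate the ``carrier'' factor: one of the indices $l_i$ must equal either $m$ or $m+s$, because these are the only two excited lines sharing the phase $\phi_m$. This gives the pre-factor $S(k)$ (pulled out of the sum since $\sum l_i = k$), and a factor $R(m)U(m)$ or $R(m+s)U(m+s)$; the latter is rewritten as $R(m+s)\frac{|U(m+s)|}{|U(m)|}U(m)$ so that a common $U(m)$ can be divided out to match the left-hand side $E_u\{Y_D(m+is)/U(m)\}$.

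Next I would analyse the remaining $D-1$ factors, which must pair up into $\frac{D-1}{2}$ constants (any unpaired index would reintroduce an uncancelled phase, which also forces $D$ odd). Each such pair is of the form $X(l)X(-l)$, $X(l)X(-(l+s))$, or $X(-l)X(l+s)$, i.e.\ one of $c_0$, $c_{-s}$, $c_{s}$ respectively, where I use $X(l)=R(l)U(l)$ and the normalisation bookkeeping from~\eqref{eq: inverse DFT}. A pair contributing $c_{\pm s}$ shifts the running index sum by $\pm s$; with the carrier at $m$ the output line is therefore $m + \big(\sum_k s_k\big)$ with each $s_k\in\{-s,0,s\}$, and with the carrier at $m+s$ it is $m+s+\big(\sum_k s_k\big)$. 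Writing $k=m+is$, this is precisely the condition $\sum_k s_k = is$ in the first bracket and $\sum_k s_k=(i-1)s$ in the second, and the admissible range $i=-\frac{D-1}{2},\dots,\frac{D+1}{2}$ is exactly what one gets by letting all $\frac{D-1}{2}$ pairs be $c_{-s}$ (minimum) up through all being $c_{s}$ plus the $+s$ from the $m+s$ carrier (maximum), matching the endpoint discussion preceding the theorem. Summing over all assignments of the $\frac{D-1}{2}$ pair-labels subject to the index-sum constraint produces the inner sums $\sum_{\sum_k s_k = is}\prod_k c_{s_k}$ and $\sum_{\sum_k s_k=(i-1)s}\prod_k c_{s_k}$.

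The combinatorial multiplicity is handled exactly as in the random-phase case: among the $D$ positions, choosing which one is the carrier and how the remaining $D-1$ split into ordered pairs contributes the factor $D!$ up to terms where indices coincide; those coincidences involve $O(1)$ index tuples against sums of $O(N)$ tuples, hence are $O(N^{-1})$, which absorbs into the stated error term (this is the same mechanism that produced the ``six versus three permutations'' remark after~\eqref{eq: BLA contributions random-phase MS}). The requirements $d\ge4$ (full case) and $\frac{d}{2}\ge5$, odd (odd case), together with the choices $s=c_{\mathrm{shift}}d+1$ or $s=c_{\mathrm{shift}}d+2$, guarantee via Remark~\ref{remark: frequency resolution phase-coupled MS} that the lines $m+is$ under consideration do not accidentally coincide with other excited lines in a way that would inject additional linear contributions, so the enumeration above is exhaustive. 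Collecting the carrier contribution, the pair-product sums, the $D!$ multiplicity, and the $O(N^{-1})$ remainder yields the first displayed formula; the expression for the negative frequency line then follows immediately from $Y_D(-k)=Y_D^{*}(k)$ and $U(-m)=U^{*}(m)$ (real signals), which conjugates every $c_{s_k}$ into $c_{-s_k}$ and hence flips the sign of the index-sum constraints, giving~\eqref{eq: shifted BLA negative frequencies}.

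The main obstacle I anticipate is the bookkeeping of the combinatorial multiplicity when the carrier line and one of the paired lines can collide (e.g.\ a pair $X(l)X(-(l+s))$ with $l=m$ or $l+s=m+s$), and more generally verifying that every such degenerate configuration contributes only $O(N^{-1})$ rather than a genuine $O(1)$ term that would need to be tracked; making the $D!$ count precise ``up to $O(N^{-1})$'' uniformly in $i$ and $D$ is the delicate part, everything else being a careful but routine convolution-and-phase-cancellation argument parallel to the one already spelled out for random-phase multisines.
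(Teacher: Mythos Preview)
Your proposal is correct and follows essentially the same approach as the paper. The paper's own proof is the single sentence ``The theorem follows immediately from the discussion above,'' where ``the discussion above'' is precisely the carrier-index/pairing/$c_{0},c_{\pm s}$ argument you have spelled out; your write-up is in fact a more careful and explicit version of that discussion, including the $D!$ multiplicity and the $O(N^{-1})$ handling of degenerate index collisions that the paper only hints at via the cubic example.
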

\begin{proof}
	The theorem follows immediately from the discussion above.
\end{proof}
From here on, the error term $O(N^{-1})$ will be dropped.
For example, for \mbox{$D = 3$}, there are four frequency lines where \mbox{$E_u\left\{\frac{Y_3(k)}{U(m)}\right\}$} has a nonzero mean. These are listed below.
\begin{subequations}
\label{eq: shifted BLA cubic}
	\begin{enumerate}
		\item At frequency line $k = m - s$:
			\begin{equation}
				E_u\left\{\frac{Y_3(k)}{U(m)}\right\} = 6 S(m - s) R(m) c_{-s}
			\end{equation}
		\item At frequency line $k = m$:
			\begin{equation}
				E_u\left\{\frac{Y_3(k)}{U(m)}\right\} = 
				\begin{aligned}[t]
					& 6 S(m) R(m) c_0 								\\
					& + 6 S(m) R(m+s) \frac{\lvert U(m+s) \rvert}{\lvert U(m) \rvert} c_{-s}
				\end{aligned}
			\end{equation}
		\item At frequency line $k = m + s$:
			\begin{equation}
				E_u\left\{\frac{Y_3(k)}{U(m)}\right\} = 
				\begin{aligned}[t]
					&  6 S(m + s) R(m+s) \frac{\lvert U(m+s) \rvert}{\lvert U(m) \rvert} c_0	\\
					& + 6 S(m + s) R(m) c_s
				\end{aligned}
			\end{equation}
		\item At frequency line $k = m + 2s$:
			\begin{equation}
				E_u\left\{\frac{Y_3(k)}{U(m)}\right\} = 6 S(m + 2s) R(m+s) \frac{\lvert U(m+s) \rvert}{\lvert U(m) \rvert} c_s
			\end{equation}
	\end{enumerate}
\end{subequations}
The results at the frequency lines \mbox{$k = m - s$} and \mbox{$k = m + 2 s$} are of particular interest. At these frequency lines, \mbox{$E_u\left\{\frac{Y_3(k)}{U(m)}\right\}$} is proportional to \mbox{$S(k) R(k + s)$} and \mbox{$S(k) R(k - s)$}, respectively. The frequency shift of the input dynamics $R$ over a frequency $-s$ (or $s$) creates a shift of the poles and the zeros of $R$ over a frequency $-s$ (or $s$). Note that the shifted poles and zeros are no longer real nor paired in complex conjugated couples. Hence, the rational transfer functions related to \mbox{$E_u\left\{\frac{Y_3(m - s)}{U(m)}\right\}$} and \mbox{$E_u\left\{\frac{Y_3(m + 2 s)}{U(m)}\right\}$} will have complex coefficients instead of real coefficients. This will be used in the next subsection.

\subsection{The shifted \textsc{BLA} and parametric smoothing}

For simplicity, we continue to explain the idea on a third-degree nonlinearity. The generalization to an arbitrary degree $D$ is explained in \autoref{sec: generalization to arbitrary degree}.
First, the \textsc{FRF}~measurements \mbox{$E_u\left\{\frac{Y_3(k)}{U(m)}\right\}$} and \mbox{$E_u\left\{\frac{Y_3(k)}{U(-m)}\right\}$} will be collected at appropriate frequency lines $k$, such that only contributions are selected where the input and output dynamics shift over a unique frequency offset.
Next, a parametric transfer function model will be identified to get direct access to the poles and the zeros of the system. As the shifted poles result in a transfer function model with complex coefficients, an adapted frequency domain estimator \citep{Peeters2001} will be used.
Finally, the input and the output dynamics will be split by separating the shifting poles and zeros from those that do not move.

From~\eqref{eq: shifted BLA negative frequencies}, it follows that
\begin{equation}
	E_u\left\{\frac{Y_3(-(m-s))}{U(-m)}\right\} = 6 S(-(m-s)) R(-m) c_s
\end{equation}
and that
\begin{equation}
	E_u\left\{\frac{Y_3(-(m+2s))}{U(-m)}\right\} = 
	\begin{aligned}[t]
		& 6 S(-(m+2s)) R(-(m+s)) 						\\
		& \times \frac{\lvert U(m+s) \rvert}{\lvert U(m) \rvert} c_{-s}
	\end{aligned}
\end{equation}
Hence, \mbox{$E_u\left\{\frac{Y_3(k)}{U(-m)}\right\}$} is proportional to \mbox{$S(k) R(k - s)$} at the frequency lines \mbox{$k = -(m - s)$}, while it is proportional to \mbox{$S(k) R(k + s)$} at the frequency lines \mbox{$k = -(m + 2s)$}.
Therefore, by analogy with~\eqref{eq: BLA periodic}, define the shifted \textsc{BLA}s $G_\mathrm{SBLA}^{+}(k)$ and $G_\mathrm{SBLA}^{-}(k)$ by collecting \mbox{$E_u\left\{\frac{Y(k)}{U(m)}\right\}$} and \mbox{$E_u\left\{\frac{Y(k)}{U(-m)}\right\}$} at the appropriate frequency lines $k$, such that contributions proportional to \mbox{$S(k) R(k + s)$} and \mbox{$S(k) R(k - s)$}, respectively, result:
\begin{definition}[shifted best linear approximation]
	\label{def: shifted BLA}
	For a system excited by a phase-coupled multisine (see Definitions~\ref{def: full phase-coupled MS} and~\ref{def: odd phase-coupled MS}), the shifted \textsc{BLA} $G_\mathrm{SBLA}^{+}(k)$ is defined as
	\begin{equation}
		G_\mathrm{SBLA}^{+}(k) :=
		\begin{cases}
			E_u\left\{\frac{Y(k)}{U(m)}\right\}	& \text{at $k = m-s$} 	\\
			E_u\left\{\frac{Y(k)}{U(-m)}\right\}	& \text{at $k = -(m+2s)$}
		\end{cases}
	\end{equation}
	while the shifted \textsc{BLA} $G_\mathrm{SBLA}^{-}(k)$ is defined as
	\begin{equation}
		G_\mathrm{SBLA}^{-}(k) :=
		\begin{cases}
			E_u\left\{\frac{Y(k)}{U(m)}\right\}	& \text{at $k = m+2s$} 	\\
			E_u\left\{\frac{Y(k)}{U(-m)}\right\}	& \text{at $k = -(m-s)$}
		\end{cases}
	\end{equation}
	with $m$ defined in~\eqref{eq: excited line group one}.
\end{definition}
Since \mbox{$G_\mathrm{SBLA}^{-}(k) = {(G_\mathrm{SBLA}^{+}(-k))}^{*}$}, we can focus completely on one of both, e.g. $G_\mathrm{SBLA}^{-}(k)$.

Next, a parametric transfer function model is identified on $G_\mathrm{SBLA}^{-}(k)$, using a weighted least-squares estimator \citep{Peeters2001}
\begin{subequations}
	\label{eq: parametric modeling shifted BLA}
	\begin{equation}
		\hat{\theta} = \argmin_{\theta} K(\theta)
		\, ,
	\end{equation}
	where the cost function $K(\theta)$ is equal to
	\begin{equation}
		\label{eq: cost function shifted BLA}
		\frac{1}{N} \sum_{k \in \{ m+2s, -(m-s) \}} \frac{\lvert G_\mathrm{SBLA}^{-}(k) - G_\mathrm{SBLA}^{-}(k, \theta) \rvert^2}{\sigma_{G_\mathrm{SBLA}^{-}}^2(k)}
		\, .
	\end{equation}
\end{subequations}
Here, \mbox{$G_\mathrm{SBLA}^{-}(k, \theta)$} is a parametric transfer function model with the complex coefficients $b_\mathrm{SBLA,l}$ and $a_\mathrm{SBLA,l}$ collected in the parameter vector $\theta$:
\begin{equation}
	G_\mathrm{SBLA}^{-}(k, \theta) = \frac 	{\sum_{l = 0}^{n_R + n_S} 
						b_{\mathrm{SBLA},l} e^{-j 2\pi \frac{k}{N}l}}
						{\sum_{l = 0}^{m_R + m_S} 
						a_{\mathrm{SBLA},l} e^{-j 2\pi \frac{k}{N}l}}
	\, ,
\end{equation}
and \mbox{$\sigma_{G_\mathrm{SBLA}^{-}}^2(k)$} is the sample variance of \mbox{$G_\mathrm{SBLA}^{-}(k)$}.
Since \mbox{$G_\mathrm{SBLA}^{-}(k)$} is proportional to \mbox{$S(k)R(k-s)$}, the poles (and the zeros) of $R$ will be shifted over a frequency $s$, and will no longer be real, nor complex conjugated. By comparing the poles (and the zeros) of \mbox{$G_\mathrm{SBLA}^{-}(k, \theta)$} and their complex conjugates, the poles (and the zeros) of $R$ and $S$ can be separated. Indeed, a frequency shift $e^{j 2 \pi \frac{2 s}{N}}$ will be visible in the complex plane between the shifted poles (and zeros) of $R$ and their complex conjugates, while the poles (and zeros) of $S$ will not show this shift. \autoref{fig: pole shifting} shows an example of the pole shifting.

This approach has the potential to discriminate between a Wiener, a Hammerstein, and a Wiener-Hammerstein model, based on whether there are shifting poles/zeros (and thus input dynamics), fixed poles/zeros (and thus output dynamics), or both. A more detailed discussion on structure discrimination is provided in \citet{Schoukens2015}.

\begin{figure}
	\centering
	\includegraphics[width=0.45\textwidth]{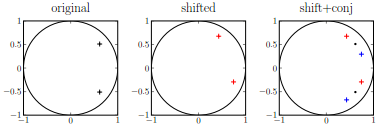}
	\caption{Example of shifting poles
			(Left: original poles; Middle: shifted poles; Right: Shifted and conjugated poles).
			\label{fig: pole shifting}
		}
\end{figure}

\subsection{Generalization to an arbitrary degree $D$}
\label{sec: generalization to arbitrary degree}

The basic idea, that was explained for a cubic nonlinearity in the previous subsection, is now generalized to an arbitrary degree $D$.
Under Assumption~(A\ref{ass: nonlinear odd term}), the result in Theorem~\ref{theorem: FRFs phase-coupled MS} can be used to separate the input and the output dynamics.
The main concern, however, is that the shifted \textsc{BLA}s will contain contributions where the input and the output dynamics are shifted over distinct frequencies if \mbox{$D \ge 5$}. This will be addressed now.

The results in Theorem~\ref{theorem: FRFs phase-coupled MS} show that the shifted \textsc{BLA} $G_\mathrm{SBLA}^{-}(k)$ will not only contain terms that are proportional to \mbox{$S(k) R(k-s)$} for \mbox{$D \ge 5$}, but also terms that are proportional to \mbox{$S(k) R(k-2s)$}. One way to deal with this is to redefine the shifted \textsc{BLA}, thereby only considering the outer frequency lines \mbox{$k = m - \frac{D_\mathrm{max}-1}{2}s$} and \mbox{$k = m + \frac{D_\mathrm{max}+1}{2}s$}, where a unique frequency shift of $R$ is present. The main disadvantage of this approach is that it requires knowledge of the maximal degree of nonlinearity $D_\mathrm{max}$. Moreover, in order not to be disturbed by linear contributions at these frequency lines, the frequency resolution of the phase-coupled multisine excitation should be lowered (cfr. Remark~\ref{remark: frequency resolution phase-coupled MS}).
Therefore, a different approach is followed here.

Both the contributions proportional to \mbox{$S(k) R(k-s)$} and \mbox{$S(k) R(k-2s)$} will make the poles of $R$ shift over a positive frequency offset, while those of $S$ will remain fixed. Hence, the poles and the zeros of $R$ and $S$ can still be separated.
Since two distinct frequency shifts of the input dynamics are present in the shifted \textsc{BLA}, its parametric model should have a larger model order, namely the model order of $S$ plus two times the model order of $R$. This will not be done, however, because the contributions that are proportional to \mbox{$S(k) R(k-s)$} are dominant over those that are proportional to \mbox{$S(k) R(k-2s)$} (see \autoref{app: Dominant terms}), certainly if we assume that the cubic nonlinearities dominate the higher-order contributions. Hence, the terms in $G_\mathrm{SBLA}^{-}(k)$ that are not proportional to \mbox{$S(k) R(k-s)$} will simply be considered as nuisance terms.

\subsection{Time origin}

Since phase-coupled multisines rely on the fact that some frequency lines have equal phase, the time origin is important.
This synchronization issue is addressed now.

Suppose that the input and the output measurements are shifted over a time \mbox{$\delta$}, i.e. $t$ becomes \mbox{$t - \delta$}. This results in a frequency-dependent phase shift in the input and the output spectrum. For example, $U(k)$ becomes \mbox{$U(k) e^{j k \Delta}$}, with \mbox{$\Delta = -\frac{2 \pi \delta}{N}$}. Since the shifted \textsc{BLA} is defined as the expected value of the ratio of the input and the output spectrum at distinct frequency lines (see Definition~\ref{def: shifted BLA}), a phase shift is present in this shifted \textsc{BLA}:
\begin{equation}
	E_u\left\{\frac{Y(k)}{U(m)}\right\} \rightarrow E_u\left\{\frac{Y(k)}{U(m)} e^{j (k-m) \Delta}\right\}
	\, .
\end{equation}

Since \mbox{$\delta$} does not need to be an integer, the compensation for a shifted time origin will be done in the frequency domain.
The phase shift $\Delta$ can be determined for each pair of frequency lines at which excitation is present as
\begin{equation}
	\Delta(m) = \frac{\angle U(m + s) - \angle U(m)}{s}
	\, ,
\end{equation}
and the expected values in the shifted \textsc{BLA}s can be compensated by multiplying \mbox{$E_u\left\{\frac{Y(k)}{U(m)}\right\}$} with \mbox{$e^{j (m-k) \Delta(m)}$}, and \mbox{$E_u\left\{\frac{Y(k)}{U(-m)}\right\}$} with \mbox{$e^{j (-m-k) \Delta(m)}$}.

\section{Experimental results}
\label{sec: Experimental results}

This section illustrates the proposed method on experimental data obtained from the Wiener-Hammerstein benchmark system \citep{Schoukens2009}.

\subsection{Device}
\label{sec: description of the device}

The Wiener-Hammerstein benchmark system is an electronic circuit with a Wiener-Hammerstein structure. It consists of a diode-resistor network sandwiched in between two third-order filters (see \autoref{fig: benchmark system}). The input filter $R$ is a Chebyshev low-pass filter with a ripple of \SI{0.5}{\deci\bel} and a cut-off frequency of \SI{4.4}{\kilo\hertz}. The output filter $S$ is an inverse Chebyshev filter with a stop-band attenuation of \SI{40}{\deci\bel} starting at \SI{5}{\kilo\hertz}. It has a transmission zero in the frequency band of interest.

\begin{figure}
	\centering
	\includegraphics[width=0.45\textwidth]{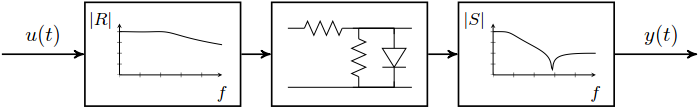}
	\caption{The Wiener-Hammerstein benchmark system consists of a diode-resistor network sandwiched in between two third-order Chebyshev filters.
			\label{fig: benchmark system}
		}
\end{figure}

\subsection{Measurement data}

The benchmark data were obtained using a filtered Gaussian noise excitation \citep{Schoukens2009} and are therefore not used in this paper. Here, a random-phase multisine is used to measure the \textsc{BLA}, and an odd phase-coupled multisine is used to measure the shifted \textsc{BLA}. For both excitations, the input and output are measured at a sample frequency of \SI{78125}{\hertz}.

The random-phase multisine contains \mbox{$N = 8192$} samples. It has a flat amplitude spectrum with $682$~frequencies ranging from \SI{19}{\hertz} to \SI{13800}{\hertz} where excitation is present, and an rms level of \SI{380}{\milli\volt}. This relatively small rms level is chosen to keep the nonlinear distortion level small. Seven phase realizations and three periods are applied. The first period is removed to avoid the effects of transients.

The phase-coupled multisine contains \mbox{$N = 8192$} samples as well, and also has a flat amplitude spectrum. With \mbox{$d = 10$}, \mbox{$s = 242$}, and \mbox{$i_\mathrm{max} = 111$}, there are $224$~frequencies that range from \SI{47}{\hertz} to \SI{12941}{\hertz} where excitation is present. The signal is normalized to have a maximal amplitude of \SI{2}{\volt}. This amplitude level corresponds more or less to that of the Wiener-Hammerstein benchmark data. Again, three periods are applied where the first one is removed to avoid the effects of transients. One thousand phase realizations are applied to almost completely average out the nonlinear distortions in the shifted \textsc{BLA} so that we can show a high-quality nonparametric estimate of the shifted \textsc{BLA} (see \autoref{fig: shifted BLA WH}). However, much less phase realizations can be used, since the parametric modeling step will also reduce the impact of the nonlinear distortions (and the noise). For example, ten phase realizations can be enough (see \autoref{sec: discussion results}).

\begin{figure}
	\centering
	\includegraphics[width=0.4\textwidth]{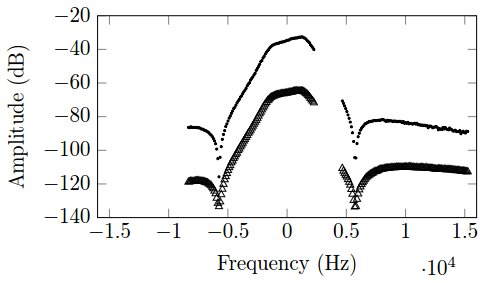}
	\caption{The shifted \textsc{BLA} (black dots) is not symmetric around the origin. Its standard deviation is shown in black triangles.
			\label{fig: shifted BLA WH}
		}
\end{figure}

\subsection{The \textsc{BLA} and the shifted \textsc{BLA}}
\label{sec: BLA and shifted BLA}

The \textsc{BLA} is first estimated non-parametrically via the robust method \citep{Schoukens2012} using the random-phase multisine data. Next, a parametric fourth-order transfer function model is estimated on top of the non-parametric \textsc{BLA} by minimizing a weighted least-squares cost function that is similar to~\eqref{eq: cost function shifted BLA}, but with real parameters $\theta$. Although a sixth-order model is expected ($R$ and $S$ are both third-order filters), a fourth-order model seems enough for the data at hand (see \autoref{fig: BLA WH}).
The poles and the zeros of the parametric model are used as a reference to compare them with those of the shifted \textsc{BLA}.

\begin{figure}
	\centering
	\includegraphics[width=0.4\textwidth]{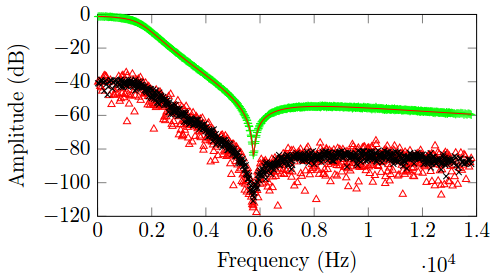}
	\caption{A parametric fourth-order model (red line) can explain the non-parametric \textsc{BLA} (green) (the magnitude of the complex difference between the \textsc{BLA}s (red triangles) coincides with the estimated total distortion level (black)).
			\label{fig: BLA WH}
		}
\end{figure}

The shifted \textsc{BLA} is estimated non-parametrically by averaging over the \num{1000} realizations of the input. The averaged value is shown together with its standard deviation in \autoref{fig: shifted BLA WH}. Observe that the amplitude characteristic is not symmetric around the origin. Hence, a parametric transfer function model with complex coefficients is required. A fourth-order model is estimated using the weighted least-squares approach in~\eqref{eq: parametric modeling shifted BLA}. The poles of this model, together with their complex conjugates and the original poles of the \textsc{BLA} are shown in \autoref{fig: poles WH}. A similar picture for the zeros is shown in \autoref{fig: zeros WH}.

\begin{figure}
	\centering
	\includegraphics[width=0.4\textwidth]{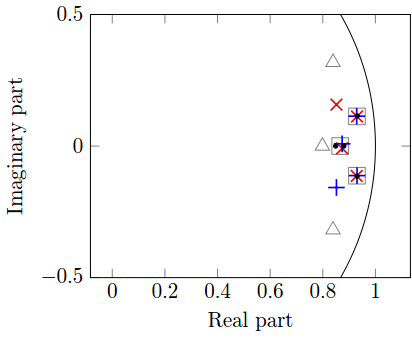}
	\caption{The poles of the shifted \textsc{BLA} are shown in red crosses, their complex conjugates in blue pluses, and the original poles of the \textsc{BLA} in black dots. One pole (where red and blue are not on top of each other) shows a clear shift and can be assigned to the input filter $R$. The three other poles (almost) remain invariant and can be assigned to the output filter $S$. This is in good agreement with the internal structure of the filters in \autoref{fig: benchmark system}. Their poles are added in gray as a reference (triangles: $R$, squares: $S$).
			\label{fig: poles WH}
		}
\end{figure}

\begin{figure}
	\centering
	\includegraphics[width=0.4\textwidth]{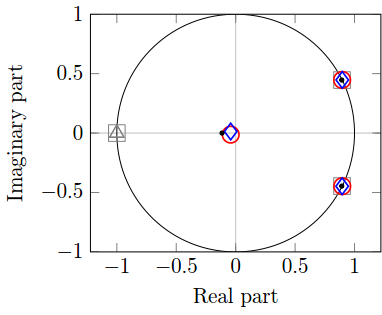}
	\caption{Zeros of the shifted \textsc{BLA} are shown in red circles, their complex conjugates in blue diamonds, and zeros of the \textsc{BLA} in black dots. One zero outside the unit circle is not shown. The complex pair of zeros can clearly be assigned to the output filter $S$. The other zeros cannot be classified. The zeros of the filters in \autoref{fig: benchmark system} are added in gray as a reference (triangles: $R$, squares: $S$).
			\label{fig: zeros WH}
		}
\end{figure}

\subsection{Discussion of the results}
\label{sec: discussion results}

One real pole can be assigned to the input filter $R$, since the corresponding pole of the shifted \textsc{BLA} shows a clear shift of about \SI{21}{\degree} with respect to its complex conjugate. This shift also nicely corresponds to the expected shift of \mbox{$\frac{2 s}{N}\SI{360}{\degree} = \SI{21.3}{\degree}$}. The other poles do (almost) not move. They can be assigned to the output filter $S$. Considering the internal structures of the filters (see \autoref{sec: description of the device}), the poles are assigned correctly (see also \autoref{fig: poles WH}).
The input filter $R$ should have a complex conjugate pole pair as well, but its effect on the FRFs seems unnoticeable due to the presence of the transmission zero in $S$ (see also \autoref{sec: BLA and shifted BLA}).

The complex pair of zeros can be assigned to the output filter $S$, since these zeros do not shift. This pair of zeros is assigned correctly, as it corresponds to the transmission zero in the output filter (see \autoref{sec: description of the device} and \autoref{fig: zeros WH}). The real zero in \autoref{fig: zeros WH} cannot be clearly assigned. Although a shift of about \SI{35}{\degree} is present between the corresponding zero of the shifted \textsc{BLA} and its complex conjugate, its amplitude is so small that a small uncertainty on the zero position can completely change its classification.
The inability to assign most of the zeros is to be expected, since, except for the transmission zero, the true zeros are at $-1$, and there is no excitation present in that part of the frequency band. Therefore, the uncertainty on these estimated zero positions is large, which prohibits their classification.

Similar results can be obtained when only ten phase realizations are used instead of one thousand. The \num{1000} experiments were split in \num{100} groups of ten phase realizations. In \num{61} cases, a correct assignment of the poles and zeros could clearly be made. In \num{36} of the remaining cases, the most damped real pole could have been wrongfully assigned to $S$, while the least damped real pole would then have been wrongfully assigned to $R$. The wrong assignment of these poles is likely to have a small impact on the further initialization of the Wiener-Hammerstein model, since both real poles lie close to each other. Only in \num{3} cases, it was unclear whether both real poles should be assigned to $R$, to $S$, or one to $R$ and one to $S$.

\section{Conclusion}
\label{sec: Conclusion}

The best linear approximation (\textsc{BLA}) of a Wiener-Hammerstein system that is excited by Gaussian noise or a random-phase multisine is proportional to the product of the underlying linear dynamic blocks. By applying a more specialized phase-coupled multisine, it is shown that a modified \textsc{BLA} expression is proportional to the product of the output dynamics and a frequency-shifted version of the input dynamics. On the basis of the non-parametric measurement of this modified \textsc{BLA}, it is shown to be possible to assign the identified poles and zeros to either the input or output dynamics, provided that the poles and zeros are properly excited. This is confirmed by experimental results on the Wiener-Hammerstein benchmark system.

The proposed method has the potential to discriminate between a Wiener, a Hammerstein, and a Wiener-Hammerstein model, based on whether there are shifting poles/zeros (and thus input dynamics), fixed poles/zeros (and thus output dynamics), or both.

Future work is to derive variance expressions for the shifted BLA, from which bounds on the estimated shifted poles/zeros can be calculated. With these bounds, it will be possible to determine whether a pole/zero significantly shifted or not. This will enable an automatic assignment of the poles and the zeros.

\begin{ack}
The research leading to these results has received funding from the European Research Council under the European Union's Seventh Framework Programme (FP7/2007-2013)/ERC Grant Agreement n.~320378.
This work was also supported in part by the Fund for Scientific Research (FWO-Vlaanderen), by the Flemish Government (Methusalem 1), and by the Belgian Federal Government (IAP DYSCO VII/19).
\end{ack}

\bibliographystyle{model5-names}
\bibliography{References_WH_pcMS}

\appendix
\section{Dominant terms in the shifted \textsc{BLA}}
\label{app: Dominant terms}

The contributions in \mbox{$G_\mathrm{SBLA}^{-}(k)$} that are proportional to \mbox{$S(k) R(k-s)$} are dominant over those that are proportional to \mbox{$S(k) R(k-2s)$}. This is shown in this appendix.

First, it will be shown that \mbox{$c_0 > \lvert c_s \rvert$}.
We have that
\begin{equation}
	c_0 = \frac{2}{N} \sum_m \lvert X(m) \rvert^2 + \lvert X(m+s) \rvert^2
	\, ,
\end{equation}
and that
\begin{equation}
	\lvert c_s \rvert
	\begin{aligned}[t]
		& = \frac{2}{N} \left\lvert \sum_m X(-m) X(m + s) \right\rvert	\\
		& \le \frac{2}{N} \sum_m \lvert X(-m) X(m + s) \rvert
	\, .
	\end{aligned}
\end{equation}
By working out \mbox{$(\lvert X(m) \rvert - \lvert X(m+s) \rvert)^2$} and rearranging the terms, we have
\begin{multline}
	\lvert X(m) \rvert^2 + \lvert X(m+s) \rvert^2 \\ = 2 \lvert X(m) \rvert \lvert X(m+s) \rvert + (\lvert X(m) \rvert - \lvert X(m+s) \rvert)^2
	\, ,
\end{multline}
and thus \mbox{$c_0 \ge 2 \lvert c_s \rvert$}.

Let $\alpha_D(k)$ be the ratio of the contributions in $Y_D(k)$ that are proportional to \mbox{$S(k) R(k-s)$} and those that are proportional to \mbox{$S(k) R(k-2s)$}. Then we need to show that \mbox{$\lvert \alpha_D(k) \rvert > 1$}. From Theorem~\ref{theorem: FRFs phase-coupled MS}, we have
\begin{equation}
	\alpha_D(k) = \frac 	{ 		\frac{\lvert U(m+s) \rvert}{\lvert U(m) \rvert} 	\sum\limits_{\sum_k s_k = s} \prod_{k = 1}^{\frac{D-1}{2}} c_{s_k}}
				{\phantom{ 	\frac{\lvert U(m+s) \rvert}{\lvert U(m) \rvert}}	\sum\limits_{\sum_k s_k = 2 s} \prod_{k = 1}^{\frac{D-1}{2}} c_{s_k}}
\end{equation}
at the frequency lines \mbox{$k = m + 2 s$}, and
\begin{equation}
	\alpha_D(k) = \frac 	{\phantom{ 	\frac{\lvert U(m+s) \rvert}{\lvert U(m) \rvert}}	\sum\limits_{\sum_k s_k = s} \prod_{k = 1}^{\frac{D-1}{2}} c_{s_k}}
				{ 		\frac{\lvert U(m+s) \rvert}{\lvert U(m) \rvert}	\sum\limits_{\sum_k s_k = 2 s} \prod_{k = 1}^{\frac{D-1}{2}} c_{s_k}}
\end{equation}
at the frequency lines \mbox{$k = -(m - s)$}.
As a factor \mbox{$\frac{\lvert U(m+s) \rvert}{\lvert U(m) \rvert} > 1$} would be advantageous in one case, and disadvantageous in the other case, simply consider \mbox{$\frac{\lvert U(m+s) \rvert}{\lvert U(m) \rvert} = 1$}.
Since the $s_k$s in the numerator should sum up to $s$, at least one factor $c_s$ should be present in the numerator. Likewise, two factors $c_s$ should be present in the denominator. The remaining $s_k$s should sum up to zero. Hence,
\begin{equation}
	\alpha_D(k) = \frac{c_s}{c_s c_s}	\frac 	{\sum\limits_{\sum_k s_k = 0} \prod_{k = 1}^{\frac{D-3}{2}} c_{s_k}}
							{\sum\limits_{\sum_k s_k = 0} \prod_{k = 1}^{\frac{D-5}{2}} c_{s_k}}
	\, .
\end{equation}
A zero-sum of the $s_k$s is possible if all of them are zero. Furthermore, one or more pairs \mbox{$c_0 c_0$} can be replaced by \mbox{$c_s c_{-s} = \lvert c_s \rvert^2$}. Hence,
\begin{equation}
	\lvert \alpha_D(k) \rvert 
	\begin{aligned}[t]
		& =	\frac{1}{\lvert c_s \rvert}	\frac 	{\sum\limits_{i=0}^{\floor*{\frac{D-3}{4}}} c_0^{\frac{D-3}{2} - 2 i} \lvert c_s \rvert^{2 i}}
								{\sum\limits_{i=0}^{\floor*{\frac{D-5}{4}}} c_0^{\frac{D-5}{2} - 2 i} \lvert c_s \rvert^{2 i}}	\\
		& \ge \frac{c_0}{\lvert c_s \rvert}
	\, ,
	\end{aligned}
\end{equation}
and since \mbox{$c_0 \,\ge 2 \lvert c_s \rvert$}, we have \mbox{$\lvert \alpha_D(k) \rvert \,\ge 2 > 1$}.
For \mbox{$D=3$}, there are no contributions proportional to \mbox{$S(k) R(k-2s)$}, so that \mbox{$\alpha_3(k) \rightarrow \infty$}.

\end{document}